\renewcommand{\emph}[1]{{\it {#1}}}
\newcommand{\figcite}[1]{Figure {#1}}
\newcommand{\figOSM}{1}
\newcommand{\figCayley}{2}
\newcommand{\figEgKlein}{3}
\newcommand{\figEgCyclic}{4}
\newcommand{\figAminoAcid}{5}
\newcommand{\T}{{\mathcal T}}
\newcommand{\C}{{\mathcal C}}
\newcommand{\NN}{{\mathbb N}}
\newcommand{\Z}{{\mathbb Z}}
\newcommand{\real}{\mathbb{R}}
\newcommand{\mS}{\Sigma}
\newcommand{\f}{\bm{f}}
\newcommand{\g}{\bm{g}}
\newtheorem{theorem}{Theorem}[section]
\newtheorem{example}[theorem]{Example}
\newtheorem{algorithm}{Algorithm}
\newenvironment{bmcformat}{\baselineskip20pt\sloppy\setboolean{publ}{false}}{\baselineskip20pt\sloppy}
\begin{document}
\begin{bmcformat}

%%%%%%%%%%%%%%%%%%%%%%%%%%%%%%%%%%%%%%%%%%%%%%
%%                                          %%
%% Enter the title of your article here     %%
%%                                          %%
%%%%%%%%%%%%%%%%%%%%%%%%%%%%%%%%%%%%%%%%%%%%%%

\title{On the group theoretical background of assigning stepwise mutations onto phylogenies}

%%%%%%%%%%%%%%%%%%%%%%%%%%%%%%%%%%%%%%%%%%%%%%
%%                                          %%
%% Enter the authors here                   %%
%%                                          %%
%% Ensure \and is entered between all but   %%
%% the last two authors. This will be       %%
%% replaced by a comma in the final article %%
%%                                          %%
%% Ensure there are no trailing spaces at   %% 
%% the ends of the lines                    %%     	
%%                                          %%
%%%%%%%%%%%%%%%%%%%%%%%%%%%%%%%%%%%%%%%%%%%%%%

\author{Mareike Fischer$^1$%
         \email{email@mareikefischer.de}, 
         Steffen Klaere$^{2,}$\correspondingauthor%
         \email{steffen.klaere@gmail.com}, 
         Minh Anh Thi Nguyen$^1$
         \email{minh.anh.nguyen@univie.ac.at} and
         Arndt von Haeseler$^1$
         \email{arndt.von.haeseler@univie.ac.at} \\
         (authors are listed in alphabetical order)
      }

%%%%%%%%%%%%%%%%%%%%%%%%%%%%%%%%%%%%%%%%%%%%%%
%%                                          %%
%% Enter the authors' addresses here        %%
%%                                          %%
%%%%%%%%%%%%%%%%%%%%%%%%%%%%%%%%%%%%%%%%%%%%%%

\address{%
    \iid(1)Center for Integrative Bioinformatics Vienna, Max F. Perutz Laboratories, University of Vienna, Medical University of Vienna, University of Veterinary Medicine Vienna, Dr. Bohr Gasse 9, A-1030, Vienna, Austria\\
    \iid(2)Department of Mathematics and Statistics, University of Otago, P.O. Box 56, Dunedin 9054, New Zealand
}%

%\keyword{keyword1, keyword2}

\maketitle

%%%%%%%%%%%%%%%%%%%%%%%%%%%%%%%%%%%%%%%%%%%%%%
%%                                          %%
%% The Abstract begins here                 %%
%%                                          %%  
%% Please refer to the Instructions for     %%
%% authors on http://www.biomedcentral.com  %%
%% and include the section headings         %%
%% accordingly for your article type.       %%   
%%                                          %%
%%%%%%%%%%%%%%%%%%%%%%%%%%%%%%%%%%%%%%%%%%%%%%

\begin{abstract}
        % Do not use inserted blank lines (ie \\) until main body of text.
	In a recent paper \cite{klaere2008}, Klaere et al. modeled the impact of  substitutions on arbitrary branches of a phylogenetic tree on an alignment site by the so-called \textit{One Step Mutation} (OSM) matrix. 
	By utilizing the concept of the OSM matrix for the four-state nucleotide alphabet, Nguyen et al. \cite{nguyen2011} presented an efficient procedure to compute the minimal number of  substitutions needed to translate one alignment site into another.
	The present paper delivers a proof for this computation.
	Moreover, we provide several mathematical insights into the generalization of the OSM matrix to multistate alphabets.
	The construction of the OSM matrix is only possible if the matrices representing the substitution types acting on the character states and the identity matrix form a commutative group with respect to matrix multiplication.
	We illustrate a means to establish such a group for the twenty-state amino acid alphabet and critically discuss its biological usefulness.
\end{abstract}

\ifthenelse{\boolean{publ}}{\begin{multicols}{2}}{}

%%%%%%%%%%%%%%%%%%%%%%%%%%%%%%%%%%%%%%%%%%%%%%
%%                                          %%
%% The Main Body begins here                %%
%%                                          %%
%% Please refer to the instructions for     %%
%% authors on:                              %%
%% http://www.biomedcentral.com/info/authors%%
%% and include the section headings         %%
%% accordingly for your article type.       %% 
%%                                          %%
%% See the Results and Discussion section   %%
%% for details on how to create sub-sections%%
%%                                          %%
%% use \cite{...} to cite references        %%
%%  \cite{koon} and                         %%
%%  \cite{oreg,khar,zvai,xjon,schn,pond}    %%
%%  \nocite{smith,marg,hunn,advi,koha,mouse}%%
%%                                          %%
%%%%%%%%%%%%%%%%%%%%%%%%%%%%%%%%%%%%%%%%%%%%%%

%%%%%%%%%%%%%%%%%%%%%%
%======= BACKGROUND===========
%%%%%%%%%%%%%%%%%%%%%%
\section{Background}
\label{background}

Alignments of homologous sequences provide fundamental materials to the reconstruction of phylogenetic trees and many other sequence-based analyses (see, e.g., \cite{durbin1998,mount2004}). Each alignment column (site) consists of character states that are assumed to have evolved from a common ancestral state by means of mutations.
Any combination of the character states in the aligned sequences at one alignment column represents a so-called \textit{character} \cite{semple2003}, which is sometimes also called \textit{site pattern} \cite{nguyen2011}. Given a phylogenetic tree and an alignment that evolved along the tree, Klaere et al. \cite{klaere2008} showed, for binary alphabets, how a character changes into another character if a substitution occurs on an arbitrary branch of the tree. 
The impact of such a substitution is summarized by the so-called \textit{One Step Mutation} (OSM) matrix.
Nguyen et al. \cite{nguyen2011} extended the concept of the OSM matrix to the four-state nucleotide alphabet while developing a method to evaluate the goodness of fit between models and data in phylogenetic inference. 
There, the OSM matrix is constructed based on the Kimura three parameter (K3ST) substitution model \cite{kimura1981}.
Nguyen et al. \cite{nguyen2011} illustrated how one can use Maximum Parsimony (i.e. apply the Fitch algorithm \cite{fitch1971}) to compute the minimal number of  substitutions required to change one character into another character under the OSM setting. In the present paper, we deliver a proof for this transformation.

In addition, the OSM matrix can be constructed only if the substitution matrices and the identity matrix form a commutative or Abelian group (see, e.g., \cite{humphreys1996, keilen2001}) with respect to matrix multiplication \cite{nguyen2011}. 
We generalize the construction of the OSM matrix for any alphabet.
Moreover, we show that the number of  substitutions needed to convert one character into another may change if we use different groups. 
Finally, we provide a means to find an Abelian group for the twenty-state amino acid alphabet.
%%%%%%%%%%%%%%%%%%%%%%%%%%%%%%%%
%%% 2 NOTATION AND Problem recapitulation
%%%%%%%%%%%%%%%%%%%%%%%%%%%%%%%%
\section{Notation and Problem Recapitulation}
\label{notation-model}
\bigskip

%%%%%%%%%%%%%%%%%%%%%%%%%%%%%%%
%%% 2.1 Notation
%%%%%%%%%%%%%%%%%%%%%%%%%%%%%%%
\subsection{Notation}
\label{notation}
Recall that a {\it rooted binary phylogenetic $X$-tree} is a tree $\T =(V(\T),E(\T))$ with leaf set (also called taxon set) $X=\{1,\ldots,n\} \subset V(\T)$ with only vertices of degree 1 or 3 (internal vertices), where one of the vertices of degree 1 is defined to be the {\it root}, and all edges are directed away from it.  In this paper, when there is no ambiguity we often just write ``phylogenetic tree'' or ``tree'' when referring to a rooted binary phylogenetic tree. Also, when referring to a tree on a leaf set $X$ with $|X|=n$, we write {\it $n$-taxon tree} for short. 

Furthermore, recall that a {\it character} $\f$ is a function $\f: X\rightarrow \C$ for
some set $\C:=\{c_1, c_2, c_3, \ldots, c_r \}$ of $r$ {\it character states} ($r \in \NN$). We denote by $\C^n$ the set of all $r^n$ possible characters on $\C$ and $n$ taxa.
For instance, for the four-state DNA alphabet, $\C_{DNA}=\{A, G, C, T\}$ and the set $\C^n$ consists of $4^n$ elements. An {\it extension} of $\f$ to $V(\T)$ is a map $\g: V(\T)\rightarrow \C$ such that $\g(i) = \f(i)$ for all $i$ in $X$. For such an extension $\g$ of $\f$, we denote by $l_{\T}(\g)$ the number of edges $e=\{u,v\}$ in $\T$ on which a substitution occurs, i.e. where $\g(u) \neq \g(v)$. The {\it parsimony score} of $\f$ on $\T$, denoted by $l_{\T}(\f)$, is obtained by minimizing $l_{\T}(\g)$ over all possible extensions $\g$. Given a tree $\T$ and a character $\f$ on the same taxon set, one can easily calculate the parsimony score of $\f$ on $\T$ with the famous Fitch algorithm \cite{fitch1971}.
Moreover, when a character state changes along one edge of the tree, we refer to this state change as {\it substitution} or {\it mutation}. As for our purposes only so-called manifest mutations are relevant, i.e. those mutations that can be observed and are not reversed, we do not distinguish between mutations and substitutions, which is why we use these terms synonymously.

%%%%%%%%%%%%%%%%%%%%%%%%%%%%%%%
%%% 2.2 OSM construction
%%%%%%%%%%%%%%%%%%%%%%%%%%%%%%%  
\subsection{Construction of the OSM matrix}
\label{osm-construction}

We now introduce the OSM framework in a stepwise fashion. The aim of the OSM approach is to determine the effects a single mutation occurring on a rooted tree $\T$ has on a character evolving on that tree.

The first task of this approach is to formalize the term mutation and its effects on a single character state in $\C$. A mutation is an operation $\sigma:\,\C\to\C$ which is bijective, i.e. it satisfies the following condition:
%==============================
\begin{enumerate}
\item For all $c_i\in\C$ there is a $c_j\in \C$ such that $\sigma(c_i)=c_j$, and if $\sigma(c_i)=\sigma(c_j),$ then $c_i=c_j$.
\end{enumerate}
%==============================
This guarantees that a mutation affects a character state in a unique fashion. It is well-known that any bijective operation on a finite discrete state set is isomorphic to a {\it permutation} (e.g., \cite{maclane1999}). Therefore, in the following we consider mutations to be permutations.

The next step is to establish which permutations we consider admissible in a model. In other words, we next establish conditions on the set $\mS$ of permutations acting on $\C$.
%==============================
\begin{enumerate}
\item[2.] For every pair $c_i,c_j\in\C$ there is exactly one operation $\sigma\in\mS$ such that $\sigma(c_i)=c_j$.
\end{enumerate}
%==============================
This guarantees that every character state change can be observed within a single step and that we do not have any ambiguity. If $\mS$ contains the identity, i.e. the mapping $\sigma_0$ such that $\sigma_0(c_i)=c_i$ for all $c_i\in\C$, then all other permutations in $\mS$ are fix-point free due to Condition 2. Condition 2 also implies that $\mS$ contains exactly $r$ permutations, where $r$ is the number of character states in $\C$. If $\mS$ had more permutations then for all states $c_i\in\C$ there would be a pair of distinct permutations $\sigma_1,\sigma_2\in\mS$ such that $\sigma_1(c_i)=\sigma_2(c_i)$, which would lead to ambiguity. Condition 2 also concludes that we exclude GTR \cite{tavare1986} from the set of admissible models. However, we explain this more in-depth in Section \ref{link_model_perm}.

We add some more useful conditions which give $\mS$ a very convenient structure:
%==============================
\begin{enumerate}
\item[3.] For all $\sigma_1,\sigma_2\in\mS$ also the product $\sigma_1\circ\sigma_2\in\mS$. In other words, $\mS$ is closed with respect to concatenation of its permutations.
\item[4.] For all $\sigma_1,\sigma_2\in\mS$ we have $\sigma_1\circ\sigma_2=\sigma_2\circ\sigma_1$. Thus, $\mS$ is commutative, and hence the order in which we assign permutations is irrelevant for the outcome.
\item[5.] There is an element $\sigma_0\in\mS$ such that for all $\sigma_1\in\mS$ we have $\sigma_1\circ\sigma_0=\sigma_0\circ\sigma_1=\sigma_1$. As pointed out above, including the identity guarantees that all other permutations will force a state change, a feature which led to the name ``One Step Mutation''.
\item[6.] For every $\sigma_1\in\mS$ there is a $\sigma_2\in\mS$ such that $\sigma_1\circ\sigma_2=\sigma_0$. The existence of such an inverse element guarantees that every operation can be reversed within a single step, which is quite a useful property.
\item[7.] For all $\sigma_1,\sigma_2,\sigma_3\in\mS$ we have $\sigma_1\circ(\sigma_2\circ\sigma_3)=(\sigma_1\circ\sigma_2)\circ\sigma_3=\sigma_1\circ\sigma_2\circ\sigma_3$. Associativity is needed to enforce a group structure on $\mS$.
\end{enumerate}
%==============================
All of these conditions taken together imply that $\mS$ forms an Abelian group of $r$ permutations. From now on we use the matrix form of permutations for illustration of the operations. A permutation matrix $\sigma$ over $\C$ is represented by an $r\times r$ matrix such that $\sigma_{c_ic_j}=1$ if $\sigma(c_i)=c_j$, and 0 otherwise. In that case, a concatenation ``$\circ$'' is equivalent to the matrix multiplication ``$\cdot$''.

%==============================
\begin{example}
\label{exa.K3ST_state}
In genetics, the most commonly used character state set is $\C_{\text{DNA}}=\{A,G,C,T\}$. There are two different Abelian groups for four states, namely the Klein-Four-group $\Z_2\times\Z_2$ and the cyclic group $\Z_4$. The Klein-Four-group is constructed from the cyclic group over two elements, identity ($\tau_0$) and flip ($\tau_0$). These take the matrix form
%------------------------------
\begin{equation*}
\tau_0=\begin{pmatrix}1&0\\0&1\end{pmatrix},\quad
\tau_1=\begin{pmatrix}0&1\\1&0\end{pmatrix}.
\end{equation*}
%------------------------------
The Klein-Four-group consists of the four Kronecker products of these two matrices, i.e. $s_0=\tau_0\otimes\tau_0,\,s_1=\tau_1\otimes\tau_0,\,s_2=\tau_0\otimes\tau_1$, and $s_3=\tau_1\otimes\tau_1$. In full, they take the form:
%------------------------------
\begin{equation*}
s_1=\bordermatrix{
	 & A& C& G& T\cr
	A& 0& 0& 1& 0\cr
	C& 0& 0& 0& 1\cr
	G& 1& 0& 0& 0\cr
	T& 0& 1& 0& 0},\quad
s_2=\bordermatrix{
	 & A& C& G& T\cr
	A& 0& 1& 0& 0\cr
	C& 1& 0& 0& 0\cr
	G& 0& 0& 0& 1\cr
	T& 0& 0& 1& 0}, \quad
s_3=\bordermatrix{
	 & A& C& G& T\cr
	A& 0& 0& 0& 1\cr
	C& 0& 0& 1& 0\cr
	G& 0& 1& 0& 0\cr
	T& 1& 0& 0& 0}.
\end{equation*}
%------------------------------
This construction coincides with the K3ST model of substitution, where $s_1$ describes {\it transitions} within purines $(A,G)$ and pyrimidines $(C,T)$, $s_2$ represents {\it transversions} within pairs $(A,C)$ and $(G,T)$, and $s_3$ represents the remaining set of transversions within pairs $(A,T)$ and $(C,G)$.

The cyclic group is given by the permutation set
%------------------------------
\begin{equation*}
{s'}_1=\bordermatrix{
	& A& C& G& T\cr
	A& 0& 0& 1& 0\cr
	C& 1& 0& 0& 0\cr
	G& 0& 0& 0& 1\cr
	T& 0& 1& 0& 0},\quad
{s'}_2={s'}_1^2={s'}_1\cdot {s'}_1, \quad
{s'}_3={s'}_1^3,\quad {s'}_0={s'}_1^4.
\end{equation*}
%------------------------------
Note that the cyclic group $\Z_4$ has a different interpretation with a different ordering of the nucleotides. E.g., our matrix ${s'}_1$ yields the rotation $A\to G\to T\to C\to A$, while Bryant \cite{bryant2005} uses the rotation $A\to C\to G\to T\to A$. 
The cyclic group associated to the latter rotation \cite{bryant2005} is linked to the K2ST substitution model \cite{kimura1980}, where ${s'}_2$ corresponds to the transition within purines and pyrimidines, and ${s'}_1$ and ${s'}_3$ are the (not further distinguished) transversions.
\end{example}

The next step in constructing the OSM matrix is to construct a set $\mS^\T$ of operations over $\C^n$ governed by $\T$, and based on the permutation set $\mS$. To this end, we first define $\mS^n$ as a set of operations which work elementwise, i.e. for $\f=(c_1,\dots,c_n)\in\C^n$ and $\bm{\sigma}\in\mS^n$ we have
%------------------------------
\begin{equation*}
\bm{\sigma}(\bm{f}):=(\sigma_1(c_1),\dots,\sigma_n(c_n)),\quad \sigma_i\in\mS.
\end{equation*}
%------------------------------
This can also be described by the Kronecker product, i.e. equally
%------------------------------
\begin{equation*}
\bm{\sigma}(\bm{f})=\sigma_1\otimes\dots\otimes\sigma_n(\bm{f}).
\end{equation*}
%------------------------------
This means that there are $r^n$ different operators in $\mS^n=\mS\otimes\dots\otimes\mS$. Thus, for any pair of characters $\bm{f},\bm{g}\in\C^n$ we can find an operation $\bm{\sigma}\in\mS^n$ such that $\bm{\sigma}(\bm{f})=\bm{g}$.

Another noteworthy consequence of using the Kronecker product is that the elements of $\mS^n$ are permutations over $\C^n$ \cite{horn1991, steeb2011}, and in fact $\mS^n$ satisfies our Conditions 1-7, i.e. $\mS^n$ is an Abelian group over $\C^n$.

In the OSM framework we assume that the permutations acting on a character $\bm{f}\in\C^n$ are derived from the underlying rooted tree $\T$. In particular, regard the pendant edge to a Taxon $1\in X$. A permutation on this edge will only affect the character state of $f_1$. Therefore, the edge implies permutations of type
%------------------------------
\begin{equation*}
\bm{\sigma}^{1,i}:=\sigma_i\otimes\sigma_0\dots\otimes\sigma_0,\quad i=1,\dots,r-1.
\end{equation*}
%------------------------------
This construction works analogously for all pendant edges, with all but one factor being the identity $\sigma_0$, while the one non-identity factor is one of the remaining $r-1$ permutations in $\mS$. A permutation at an interior edge affects the character states of all its descendants, i.e. those taxa whose path to the root passes that edge. E.g., assume Taxa 1 and 2 form a cherry, i.e. their most recent common ancestor has no other descendants, and permutation $\sigma_i\in\mS,\,i=1,\dots,r-1$ is acting on the edge leading to this ancestor. Then, we get the permutation
%------------------------------
\begin{equation*}
\bm{\sigma}^{12,i}:=\sigma_i\otimes\sigma_i\otimes\sigma_0\dots\otimes\sigma_0=\bm{\sigma}^{1,i}\otimes\bm{\sigma}^{2,i}.
\end{equation*}
%------------------------------
The right hand side equation shows that a single permutation on an internal edge has the same effect as simultaneously applying the same permutation on the pendant edges of all descendant taxa. This also shows that the set $\mS^X$ of all permutations on the pendant edges is a generator of $\mS^n$, i.e. the closure of $\mS^X$ contains all permutations in $\mS^n$. Since $\mS^n$ contains a single permutation to transform character $\bm{f}\in\C^n$ into $\bm{g}\in\C^n$, and since $\mS^X$ generates $\mS^n$, there is a shortest chain of permutations in $\mS^X$ which transforms $\bm{f}$ into $\bm{g}$. $\mS^X$ is also the set of permutations implied by the star tree for $X$. For every $X$-tree $\T$ we have $\mS^\T\supseteq \mS^X$, and therefore $\mS^\T$ is a generator for $\mS^n$, too. 
An illustration of such a generator set $\mS^T$ over the character set $\C^n$ is the so-called {\it Cayley graph} \cite{cayley1878}, which has as vertices the elements of $\C^n$, and two elements $\f,\g\in\C^n$ are connected if there is a permutation $\bm{\sigma}\in\mS^\T$ such that $\bm{\sigma}(\f)=\g$. In \cite{klaere2008} Cayley graphs have been presented as alternative illustrations of the tree $\T$ over a binary state set $\C$.

%==============================
\begin{example}
\label{exa.K3ST_character}
Regard the K3ST model from \ref{exa.K3ST_state} and the rooted two-taxon tree depicted in \figcite{\figOSM}(a). With this $\mS^\T_{K3ST}$ is given by the set
%------------------------------
\begin{align*}
\bm{s}^{e_1,1}&:=s_1\otimes s_0,&\bm{s}^{e_2,1}&:=s_0\otimes s_1,&\bm{s}^{e_{12},1}&:=s_1\otimes s_1,\\
\bm{s}^{e_1,2}&:=s_2\otimes s_0,&\bm{s}^{e_2,2}&:=s_0\otimes s_2,&\bm{s}^{e_{12},2}&:=s_2\otimes s_2,\\
\bm{s}^{e_1,3}&:=s_3\otimes s_0,&\bm{s}^{e_2,3}&:=s_0\otimes s_3,&\bm{s}^{e_{12},3}&:=s_3\otimes s_3.
\end{align*}
%------------------------------
Each operation is thus a symmetric $16\times16$ permutation matrix depicting a transition ($\bm{s}^{e,1}$), transversion 1 ($\bm{s}^{e,2}$), or transversion 2 ($\bm{s}^{e,3}$) along edge $e\in E(\T)$. \figcite{\figOSM}(b), (c), and (d) display the permutation matrices for a transition on branch $e_1$ ($\bm{s}^{e_1,1}$), $e_2$ ($\bm{s}^{e_2,1}$) and $e_{12}$ ($\bm{s}^{e_{12},1}$), respectively. 
\figcite{\figCayley}(a) shows the Cayley graph associated with $\mS^\T_{K3ST}$.
\end{example}
%==============================

We are now in a position to recall the definition of the {\it OSM matrix} $M_\T$ for a rooted binary phylogenetic tree $\T$ as explained in \cite{klaere2008} and \cite{nguyen2011b}. For an edge $e\in E(\T)$ we denote by $p_e$ the relative branch length of $e$, i.e. its actual length divided by the length of $\T$. Thus, one can view $p_e$ as the probability that a mutation is observed at edge $e$ given that a mutation occurred on $\T$. Clearly, $\sum_{e\in E(\T)}p_e=1$. Further, denote by $\alpha_{e,i}$ the probability that this mutation on $e$ is of type $i\in\{1,\dots,r-1\}$ with $\sum_{i\in 1}^{r-1}\alpha_{e,i}=1$ for all $e\in E(\T)$. Then the OSM matrix is the convex sum of the elements in $\mS^\T$, where each permutation $\bm{\sigma}^{e,i}$ is multiplied by $p_e\alpha_{e,i}$, the probability of hitting the edge $e$ with permutation $\sigma_i\in\mS$. Thus, we obtain:
%------------------------------
\begin{equation}\label{eq:osm-matrix}
M_\T = \sum_{e \in E(\T)}\sum_{i=1}^{r-1}\alpha_{e,i} p_e \bm{\sigma}^{e,i}.
\end{equation}
%------------------------------
$M_\T$ can be regarded as the weighted exchangeability matrix for all characters given that a substitution occurs somewhere on the tree $\T$. \figcite{\figOSM}(e) depicts the OSM matrix for the tree in \figcite{\figOSM}(a). Here, colors indicate relative branch lengths $p_e$, and patterns denote permutation types $\alpha_i$. E.g., a blue square with horizontal lines indicates the product $p_{e_2}\alpha_{e_2,1}$, i.e. the probability of observing a Transition $s_1$ on Edge $e_2$.

%%%%%%%%%%%%%%%%%%%%%%%%%%%%%%%
%%% 2.3 Transformation problem
%%%%%%%%%%%%%%%%%%%%%%%%%%%%%%%
\subsection{The transformation problem}
\label{problem}

With the construction of $\mS^\T$ we have generated the tools needed to formally describe the computations in Step 4 of the {\sc Misfits} method introduced by Nguyen {\it et al.} \cite{nguyen2011}. Given a rooted tree $\T$ and two characters $\f$ and $\f^d$ in $\C^n$, we want to compute the minimal number of substitutions required on the tree to convert $\f$ into $\f^d$.

In our framework this corresponds to finding the smallest number $k$ of permutations $\bm{\sigma}_1,\dots,\bm{\sigma}_k\in\mS^\T$ such that $\bm{\sigma}_1\circ\dots\circ\bm{\sigma}_k(\f)=\f^d$. The number $k$ has multiple equivalent interpretations. It is also the length of the shortest path between $\f$ and $\f^d$ in the Cayley graph for $\mS^T$, where this path corresponds exactly to the chain $\bm{\sigma}_1\circ\dots\circ\bm{\sigma}_k$ since each edge in the Cayley graph corresponds to an operation in $\mS^\T$. $k$ is also the smallest matrix power such that $M_\T^j=0$ for $j<k$ and $M_\T^k>0$, because a positive entry in $M_\T^k$ means that there is a concatenation of $k$ permutations connecting the associated characters.

Nguyen et al. \cite{nguyen2011} presented an efficient procedure to compute the minimal number of  substitutions as summarized in Algorithm \ref{alg}, and we prove its correctness in Section \ref{parsimonyproof}.

%==============================
\begin{algorithm}\label{alg}\mbox{}
{\bf INPUT:} rooted binary phylogenetic tree $\T$ on leaf set $X$, characters $\bm{f}$ and $\f^d$ on $X$, group $\Sigma$.\par
{\bf ITERATION 1:} Align characters $\bm{f}$ and $\f^d$ and find the corresponding substitution type $\sigma_i$ which translates $f_j$ into $\f^d_j$ for all positions $j=1,\ldots,|X|$. Let $\bm{\sigma} \in \Sigma^n$ be the resulting operation. \par
{\bf ITERATION 2:} Let $\bm{h}:=c_1\ldots c_1$ be the constant character on $X$ with some $c_1 \in \mathcal{C}$ on all positions. Apply $\bm{\sigma}$ to $h$ and call the derived character $\bm{c}$.\par
{\bf ITERATION 3:} Calculate $m:=l_{\T}(\bm{c})$. \par
{\bf OUTPUT:} Minimum number $m$ of  substitutions needed to evolve $\f^d$ instead of $\bm{f}$ on $\T$.\par
\end{algorithm}
%==============================

%==============================
\begin{example}
\label{exampleK3ST}
\figcite{\figEgKlein} demonstrates how Algorithm \ref{alg} works under the K3ST model, i.e. when the group is $\Sigma=\Sigma_{K3ST}$ (\figcite{\figEgKlein}(a)).
Consider the rooted five-taxon tree in \figcite{\figEgKlein}(b) and the character $GTAGA$ at the leaves. Assume that the character $GTAGA$ is to be converted into character $ACCTC$. By comparing the two characters positionwise, we need a substitution $s_1$ on the external branch leading to Taxon 1 to convert $G$ into $A$ at the first position.
Similarly, we need a substitution $s_1$ on the external branch leading to Taxon 2, and a substitution $s_2$ on every external branch leading to Taxa 3, 4, and 5. Thus, the operation $\bm{s}:=(s_1, s_1, s_2, s_2, s_2)$ transfers the character $GTAGA$ into the character $ACCTC$. As the operation $\bm{s}:= (s_1, s_1, s_2, s_2, s_2)$ also translates the constant character $AAAAA$ into $GGCCC$, converting $GTAGA$ into $ACCTC$ is equivalent to evolving the character state $A$ at the root along the tree to obtain the character $GGCCC$ at the leaves. The Fitch algorithm applied to the character $GGCCC$ with the constraint that the character state at the root is $A$ produces a unique most parsimonious solution of two substitutions as depicted by \figcite{\figEgKlein}(c).
\end{example}
%==============================

%%%%%%%%%%%%%%%%%%%%%%%%%%%%%%%
%%%%%%%%%%%%%%%%%%%%%%%%%%%%%%%
%%% RESULTS
%%%%%%%%%%%%%%%%%%%%%%%%%%%%%%%
\section{Results}
\label{results}
\bigskip

%%%%%%%%%%%%%%%%%%%%%%%%%%%%%%%
%%% 3.1
%%%%%%%%%%%%%%%%%%%%%%%%%%%%%%%
\subsection{The impact of parsimony on the estimation of  substitutions. }\label{parsimonyproof}
\bigskip
In this section, we provide some mathematical insights into the role of Maximum Parsimony in the estimation of the number of  substitutions needed to convert a character into another one as explained in Section \ref{problem}. In particular, we deliver a proof for Algorithm \ref{alg}. 

\begin{theorem} Let $\mathcal{T}$ be a rooted binary phylogenetic tree on taxon set $X$ and let $\f$ be a character that evolved on $\mathcal{T}$ due to some evolutionary model and let $\f^d$ be another character on $X$. Then, the number of  substitutions to be put on $\mathcal{T}$ which change the evolution of $\f$ in such a way that $\f^d$ evolves instead of $\f$ can be calculated with Algorithm $\ref{alg}$.
\end{theorem}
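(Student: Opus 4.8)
The plan is to reduce the problem of counting substitutions needed to convert $\f$ into $\f^d$ to an ordinary parsimony computation, by exploiting the Abelian group structure on $\mS^n$ established earlier. The central observation is that each of the three iterations in Algorithm~\ref{alg} is a faithful translation of one combinatorial quantity into another, so the proof amounts to verifying that these translations preserve the minimal substitution count. I would first make precise what ``number of substitutions to convert $\f$ into $\f^d$'' means in the group framework: by the construction of $\mS^\T$, it is the smallest $k$ for which there exist $\bm{\sigma}_1,\dots,\bm{\sigma}_k\in\mS^\T$ (each a single mutation on a single edge) whose composition sends $\f$ to $\f^d$. This is exactly the shortest-path length in the Cayley graph, as noted in Section~\ref{problem}.

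The key algebraic step is to show that this minimal count depends only on the \emph{difference operation} $\bm{\sigma}$ produced in Iteration~1, and not on $\f$ and $\f^d$ separately. Because $\mS^n$ is an Abelian group and $\mS^\T$ generates it, there is a unique $\bm{\sigma}\in\mS^n$ with $\bm{\sigma}(\f)=\f^d$, and positionwise this $\bm{\sigma}=(\sigma_{i_1},\dots,\sigma_{i_n})$ is precisely what Iteration~1 recovers. I would argue that applying $\bm{\sigma}$ to \emph{any} character shifts the whole Cayley graph by a group translation (left-multiplication by a fixed group element is a graph automorphism of the Cayley graph), so the shortest-path length from $\f$ to $\f^d$ equals the shortest-path length from the constant character $\bm{h}=c_1\ldots c_1$ to $\bm{c}:=\bm{\sigma}(\bm{h})$. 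This justifies Iteration~2: replacing $(\f,\f^d)$ by $(\bm{h},\bm{c})$ leaves the minimal number of edgewise permutations unchanged.

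The remaining step, justifying Iteration~3, is to show that the minimal number of edgewise group operations transforming $\bm{h}$ into $\bm{c}$ equals the parsimony score $l_\T(\bm{c})$. Here I would use that a permutation on an internal edge acts simultaneously on exactly the descendant leaves of that edge, as shown in Section~\ref{osm-construction} via $\bm{\sigma}^{12,i}=\bm{\sigma}^{1,i}\otimes\bm{\sigma}^{2,i}$. Starting from the constant character $\bm{h}$, an extension $\g$ of $\bm{c}$ to the internal vertices corresponds to assigning a group element to each vertex (the accumulated product of operations from the root down to that vertex), and a substitution on edge $e=\{u,v\}$ with $\g(u)\neq\g(v)$ corresponds to a single nontrivial edgewise operation there. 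Conversely, any sequence of edgewise operations induces such a vertex labelling. This bijection between edge-operation sequences and mutation-bearing extensions of $\bm{c}$, combined with the fact that the root of $\bm{h}$ carries the fixed state $c_1$, shows that the minimal operation count is exactly the constrained parsimony score, which the Fitch algorithm computes.

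The main obstacle I anticipate is the correspondence underlying Iteration~3: one must argue carefully that minimizing over \emph{chains of single-edge permutations} in $\mS^\T$ coincides with minimizing $l_\T(\g)$ over \emph{all extensions} $\g$ of $\bm{c}$. The subtlety is that a chain may apply several operations to the same edge or in a redundant order, so I would show that by the group axioms (Conditions 3--7) any chain can be reduced to one with at most one nontrivial operation per edge without increasing its length, and that the resulting edge set is exactly the mutation edges of some extension $\g$. Handling this reduction cleanly, and confirming the root-state constraint does not affect the score (any state change at the root propagates as a relabelling of $\bm{c}$ by a global group element, which does not alter $l_\T$), is where the real work lies.
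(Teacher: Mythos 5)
Your proposal is correct and follows the same route as the paper's own proof: extract the positionwise difference operation $\bm{\sigma}$, transfer the problem to the constant character via a group translation, and identify the minimal chain length in $\mS^\T$ with the parsimony score of $\bm{c}$ under the fixed root state $c_1$. You are in fact more rigorous than the paper at the decisive last step (the paper merely asserts that MP with given root state counts the needed operations, whereas you supply the reduction of a chain to at most one nontrivial operation per edge and the bijection with extensions), and you correctly insist on the root-state constraint --- just note that your parenthetical relabelling argument only establishes independence of the choice of $c_1$, not that the constrained score equals the unconstrained $l_\T(\bm{c})$ literally written in Iteration 3 (in general it does not).
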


\begin{proof} Let $\f$, $\f^d$, $X$, $\T$ and $\mS$ be as required for the input of Algorithm \ref{alg}. Then, the number of  substitutions needed to evolve $\f^d$ on $\T$ rather than $\f$ depends solely on operation $\bm{\sigma}$. In order to see this, note that $\bm{\sigma}$ describes an explicit way to translate $\f$ into $\f^d$ step by step, i.e. for each taxon seperately. The basic idea now is that in order to minimize the number of required substitutions, we need to consider the underlying tree $\T$, as this may allow a single substitution to act on an ancestor of taxa that undergo the same substitution type rather than on each taxon separately. This idea has been described above in Section \ref{osm-construction}, and it coincides precisely with the idea of the parsimony principle.

However, in order to avoid confusion regarding the operation $\bm{\sigma}$ as a character on which to apply parsimony, Algorithm \ref{alg} instead acts on the constant character. Clearly, in order to evolve the constant character $\bm{h}:=c_1\cdots c_1$ on a tree with root state $c_1$, the corresponding operation would be $\tilde{\bm{\sigma}}:=\sigma_0\circ\cdots\circ\sigma_0$. If instead of $\tilde{\bm{\sigma}}$ we let $\bm{\sigma}$ act on $\bm{h}$, the resulting character $\bm{c}$ will differ from $\bm{h}$ in the same way $\f^d$ differs from $\f$. Note that two character states in $\bm{c}$ are identical if and only if the corresponding substitutions in $\bm{\sigma}$ are identical, too. Therefore, it is possible to let MP act on $\bm{c}$ rather than directly on $\bm{\sigma}$.

By the definition of Maximum Parsimony, when applied to $\bm{c}$ on tree $\T$ with given root state $c_1$, it calculates the minimum number $m$ of substitutions to explain $\bm{c}$ on $\T$. This number $m$ is therefore precisely the number of substitions needed to generate $\bm{c}$ on $\T$ rather than $\bm{h}$. But as explained before, $\bm{c}$ and $\bm{h}$ are by definition related the same way as $\f$ and $\f^d$. Therefore, $m$ also is the number of substitutions needed to generate $\f^d$ on $\T$ rather than $\f$. This completes the proof.
\end{proof}

%%%%%%%%%%%%%%%%%%%%%%%%%%%%%%%
%%% 3.2
%%%%%%%%%%%%%%%%%%%%%%%%%%%%%%%
\subsection{The impact of different groups}
For any alphabet, there might be more than one Abelian group. Different groups might result in different numbers of  substitutions required to translate a character into another character. We illustrate this in the following examples. For the four-state nucleotide alphabet there are two Abelian groups, namely the Klein-four group and the cyclic group (see above). The cyclic group $\mS_{c}$ consists of the identity matrix $s'_0$ and the three substitution types $s'_1, s'_2, s'_3$ depicted by \figcite{\figEgCyclic}(a). Hence, $\mS_c = \{s'_0, s'_1, s'_2, s'_3\}$. Under $\mS_{c}$, we note that a substitution type which changes a character state $c_i$ to $c_j$ does not necessarily change $c_j$ to $c_i$.

\begin{example}
\label{exampleCyclic}
Assume the rooted five-taxon tree in \figcite{\figEgCyclic}(b) and the character $GTAGA$ at the leaves, which is to be converted into character $ACCTC$. The tree and the two characters are the same as in Example \ref{exampleK3ST}. By comparing the two characters positionwise under the group $\mS_c$, we need a substitution $s'_3$ (depicted in blue in \figcite{\figEgCyclic}(a)) on the external branch leading to Taxon 1 to convert $G$ into $A$ at the first position. Analogously, we need a substitution $s'_1$ on the external branches leading to Taxon 2 and to Taxon 4, and a substitution $s'_3$ on the external branches leading to Taxon 3 and to Taxon 5. Thus, the operation $\bm{s}' := (s'_3, s'_1, s'_3, s'_1, s'_3)$ transfers the character $GTAGA$ into the character $ACCTC$. As the operation $\bm{s}'$ also translates the constant character $AAAAA$ into $CGCGC$, converting $GTAGA$ into $ACCTC$ is equivalent to evolving the character state $A$ at the root along the tree to obtain the character $CGCGC$ at the leaves. The Fitch algorithm applied to the character $CGCGC$ with the constraint that the character state at the root is $A$ produces a unique most parsimonious solution of three substitutions as depicted by \figcite{\figEgCyclic}(c). Thus, under the $\mS_c$ group we need one  substitution more than under the $\mS_{K3ST}$ group.
\end{example}

Note that variation of the minimum number of substitutions needed to translate a character into another one between different groups is not surprising: As different substitution types are needed to translate one pattern into the other one, depending solely on the underlying group, one group might need the same substitution type for some neighboring branches in the tree and another group different ones. Informally speaking, this would imply that in the first case, the substitution could be ``pulled up'' by the Fitch algorithm to happen on an ancestral branch, whereas in the second case this would not be possible.

%%%%%%%%%%%%%%%%%%%%%%%%%%%%%%%
%%% 3.3
%%%%%%%%%%%%%%%%%%%%%%%%%%%%%%%
\subsection{The link between substitution models and permutation matrices} \label{link_model_perm}

In Examples \ref{exa.K3ST_state} and \ref{exa.K3ST_character} we have shown that the K3ST substitution model can be included into our framework. This section aims at discussing alternative models and how to identify their use (or lack thereof) for our approach. The set $\mS^\T$ contains a set of permutations which act on the characters in $\C^n$. 

Most substitution models assume the independence of the different branches of a tree to compute the joint probability of the characters in $\C^n$. Therefore, they use the probabilities for substitutions among the character states in $\C$ along the edges of the tree $\T$. We now establish a probabilistic link between $\mS^\T$ and $\C^n$. This link is provided by Birkhoff's theorem:
%==============================
\begin{theorem}[Birkhoff's theorem, e.g., \cite{horn1990}, Theorem 8.7.1] A matrix $M$ is doubly stochastic, i.e., each column and each row of $M$ sum to 1, if and only if for some $N<\infty$ there are permutation matrices $\sigma_1,\dots,\sigma_N$ and positive scalars $\alpha_1,\dots,\alpha_N\in\real$ such that $\alpha_1+\dots+\alpha_n=1$ and $M=\alpha_1\sigma_1+\dots+\alpha_N\sigma_N$.
\end{theorem}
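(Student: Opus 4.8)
The plan is to prove the two implications separately, with the forward (``only if'') direction being the substantial one.

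For the easy direction, suppose $M=\alpha_1\sigma_1+\dots+\alpha_N\sigma_N$ is a convex combination of permutation matrices. Each permutation matrix has nonnegative entries and exactly one $1$ in every row and column, so each of its row and column sums equals $1$. Taking the convex combination, every entry of $M$ is nonnegative and each row sum of $M$ equals $\sum_{k=1}^N\alpha_k\cdot 1=1$, and likewise for the columns. Hence $M$ is doubly stochastic, and this direction needs no further work.

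For the converse, I would argue by induction on the number of strictly positive entries of a doubly stochastic $r\times r$ matrix $M$. The engine of the argument is a combinatorial lemma: every doubly stochastic matrix $M$ admits a permutation $\pi$ of $\{1,\dots,r\}$ with $M_{i,\pi(i)}>0$ for all $i$. To obtain this, I would set up the bipartite graph on rows and columns in which row $i$ is joined to column $j$ precisely when $M_{ij}>0$, and verify Hall's marriage condition. For a set $S$ of rows, let $N(S)$ be the columns adjacent to $S$; the total mass $\sum_{i\in S}\sum_j M_{ij}=|S|$ sits entirely in the columns of $N(S)$, each of which carries column-mass at most $1$, forcing $|N(S)|\ge|S|$. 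Hall's theorem then yields a perfect matching, i.e.\ the desired $\pi$; let $\sigma$ be its permutation matrix.

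With such a $\sigma$ in hand, set $\alpha=\min_i M_{i,\pi(i)}>0$. If $\alpha=1$ then $M=\sigma$ is itself a permutation matrix and we are done. Otherwise consider $M'=M-\alpha\sigma$. By the choice of $\alpha$ all entries of $M'$ are nonnegative, at least one entry (the one attaining the minimum) has become zero, and every row and column of $M'$ sums to $1-\alpha>0$. Thus $\tfrac{1}{1-\alpha}M'$ is doubly stochastic and has strictly fewer positive entries than $M$. The base case is a doubly stochastic matrix with the minimum possible number of positive entries, namely $r$ (one per row); such a matrix is forced to be a permutation matrix. By the induction hypothesis $\tfrac{1}{1-\alpha}M'$ is a convex combination of permutation matrices, and substituting back yields $M=\alpha\sigma+(1-\alpha)\bigl(\tfrac{1}{1-\alpha}M'\bigr)$ as a convex combination as well, completing the induction. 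The main obstacle is the combinatorial lemma guaranteeing a positive permutation support: everything else is bookkeeping on entry counts and convexity. The delicate point is verifying Hall's condition from double stochasticity---the mass-counting inequality $|N(S)|\ge|S|$ must be argued carefully---and this is exactly where the doubly stochastic hypothesis is indispensable rather than cosmetic. I would therefore state Hall's marriage theorem explicitly as the one external input and make sure the reduction step genuinely decreases the number of nonzero entries so that the induction terminates.
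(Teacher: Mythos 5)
The paper does not prove this statement at all: it is quoted verbatim from the cited reference (Horn and Johnson, Theorem 8.7.1) and used as a black box, so there is no in-paper proof to compare against. Your argument is the standard proof of Birkhoff's theorem --- essentially the one in the cited source --- and it is correct and complete: the easy direction is immediate, the Hall-condition verification via mass counting ($|S|=\sum_{i\in S}\sum_{j\in N(S)}M_{ij}\le |N(S)|$ because each column carries mass at most $1$) is exactly right, the special case $\alpha=1$ and the base case of $r$ positive entries are both handled, and the induction on the number of strictly positive entries terminates since that number is bounded by $r^2$ and strictly decreases at each step while the renormalized matrix $\tfrac{1}{1-\alpha}M'$ remains doubly stochastic. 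No gaps.
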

%==============================
Therefore, the weighted sum of the permutation matrices in $\mS^\T$ yields a doubly stochastic matrix $M_\T$ as introduced in Section \ref{osm-construction}. $M_\T$ also describes a random walk on $\C^n$ governed by $\T$ where the single step in $\C^n$ is illustrated by the associated Cayley graph. Its stationary distribution is uniform, i.e. when we throw sufficiently many mutations on $\T$ then we expect to see each pattern with probability $1/r^n$.

Another, even more useful consequence of Birkhoff's theorem is the fact that it tells us which substitution models are suited for the OSM approach. If the transition matrix associated with the model is doubly stochastic, then we find a set of permutations which give rise to the model.

Let us see how this influences the symmetric form of the general time reversible model (GTR). It has the transition matrix
%------------------------------
\begin{equation*}
P_{GTR}=\bordermatrix{
	 & A& C& G& T\cr
	A& 1-a-b-c& a& b& c\cr
	C& a& 1-a-d-e& d& e\cr
	G& b& d& 1-b-d-f& f\cr
	T& c& e& f& 1-c-e-f}. \quad
\end{equation*}
%------------------------------
Assigning permutation matrices to the respective parameters yields the set $\mS_{GTR}$ with elements $s_0$ (identity) and 
%------------------------------
\begin{align*}
s_a&=
\begin{pmatrix}
0&1&0&0\\1&0&0&0\\0&0&1&0\\0&0&0&1
\end{pmatrix},&
s_b&=
\begin{pmatrix}
0&0&1&0\\0&1&0&0\\1&0&0&0\\0&0&0&1
\end{pmatrix},&
s_c&=
\begin{pmatrix}
0&0&0&1\\0&1&0&0\\0&0&1&0\\1&0&0&0
\end{pmatrix}\\
s_d&=
\begin{pmatrix}
1&0&0&0\\0&0&1&0\\0&1&0&0\\0&0&0&1
\end{pmatrix},&
s_e&=
\begin{pmatrix}
1&0&0&0\\0&0&0&1\\0&0&1&0\\0&1&0&0
\end{pmatrix},&
s_f&=
\begin{pmatrix}
1&0&0&0\\0&1&0&0\\0&0&0&1\\0&0&1&0
\end{pmatrix}.
\end{align*}
%------------------------------
The weighted sum of the non-identity elements yields 
%------------------------------
\begin{equation*}
a s_a+b s_b+c s_c+d s_d+e s_e+f s_f=
\begin{pmatrix}
d+e+f&a&b&c\\
a&b+c+f&d&e\\
b&d&a+c+e&f\\
c&e&f&a+b+d
\end{pmatrix},
\end{equation*}
%------------------------------
which is equal to $P_{GTR}$ because $a+b+c+d+e+f=1$. Thus, the set $\mS_{GTR}$ is to GTR what $\mS_{K3ST}$ is to K3ST. However, $\mS_{GTR}$ does not satisfy Condition 2, because it contains more than four elements. Therefore, it creates ambiguity since for each nucleotide there are three permutations which do not change the nucleotide. It is also not commutative (Condition 4) which means the order in which we assign the permutations matters. And it is not closed under matrix multiplication (Condition 3), which means that a concatenation of permutations in $\mS_{GTR}$ might lead to a new permutation not in $\mS_{GTR}$, i.e. we would encounter a new mutation type. All of this shows why the overall applicability of GTR to the OSM approach is rather limited. More complex models like Tamura-Nei \cite{tamura1993} do not even permit the decomposition of its transition matrix into the convex sum of permutation matrices. However, including the concept of partial permutation matrices \cite{horn1991} can address this problem. While this approach is interesting for future work, it is beyond the scope of this paper.

%%%%%%%%%%%%%%%%%%%%%%%%%%%%%%%
%%% 3.4
%%%%%%%%%%%%%%%%%%%%%%%%%%%%%%%
\subsection{Application to other biologically interesting sets}

As stated in Section \ref{osm-construction}, the OSM model only requires an underlying Abelian group. Thus, the OSM setting is applicable not only to binary data or four-state (DNA or RNA) data, but also to doublet, codon, and amino acid characters.

In particular, there are four Abelian groups for the twenty-state amino acid alphabet, namely $\Z_2\times\Z_2\times\Z_5,\,\Z_4\times\Z_5,\,\Z_2\times\Z_{10}$, and the cyclic group $\Z_{20}$ (see e.g.,\cite{keilen2001} for a complete list of all groups with up to 35 elements). Their construction is analogous to the construction of the Klein-Four group in Example \ref{exa.K3ST_state}. For example, the elements of $\Z_4\times\Z_5$ are Kronecker products of one of the four permutations in the cyclic group $\Z_4$ with one of the five permutations of the cyclic group $\Z_5$. 

\figcite{\figAminoAcid} depicts the 20 substitution types, i.e. the 20 operations including the identity acting on the amino acid character states for all four Abelian groups. If we assign probabilities to the substitution types in the matrices, the resulting matrices are doubly stochastic. The matrices show several features of the groups, e.g. that contrary to the Klein-Four group the elements of the group are not self-inverse but instead the effect of a permutation is reversed by a different mutation. Such events are present in some models of nucleotide evolution, like the strand symmetric model \cite{casanellas2005}, and relatively common in amino acid models where the transition matrix is generated by, e.g., counting mutation types in amino acid alignments (see, e.g., \cite{kosiol2005} for an overview). It might be interesting to see whether any of these can be fitted. The illustrations in \figcite{\figAminoAcid} also suggest some ordering of the amino acids to fit the model. For instance, $\Z_2\times\Z_2\times\Z_5$ and $\Z_4\times\Z_5$ seem to partition the sets into four groups with five elements each.

%%%%%%%%%%%%%%%%%%%%%%%%%%%%%%%%
%==== 	CONCLUSIONS  ============================
%%%%%%%%%%%%%%%%%%%%%%%%%%%%%%%%
\section*{Conclusions} 

In this paper, we provide the necessary mathematical background for the OSM setting which was introduced and used previously \cite{nguyen2011,nguyen2011b}, but had not been analyzed mathematically for more than two character states. Moreover, the present paper also delivers new insight concerning the requirements for the OSM model to work: In fact, we were able to show that mathematically, it is sufficient to have an underlying Abelian group -- which shows a generalization of the OSM concept that was believed to be impossible previously \cite{nguyen2011}. Therefore, we show that OSM is applicable to any number of states.

However, note that the original intuition of the authors in \cite{nguyen2011} was biologically motivated: The authors supposed that the group not only has to be Abelian, but also symmetric in the sense that each operation can be undone by being applied a second time. Thinking about the DNA, for instance, this works: For example, the transition from A to G can be reverted by another substitution of the same type, namely a transition from G to A. This symmetry criterion is fulfilled by the Klein-Four group, but not by the cyclic group on four states. Unfortunately, for 20 states there is no Abelian group fulfilling this criterion, which is why the demonstrated generalization to 20 states does not provide a nice symmetry (r.f. \figcite{\figAminoAcid}). Therefore, it remains unclear at this stage if there are biologically motivated settings for which our twenty-state generalization is directly applicable. 

\bigskip

%%%%%%%%%%%%%%%%%%%%%%%%%%%%%%%%
%==== 	AUTHORS' CONTRIBUTIONS ============================
%%%%%%%%%%%%%%%%%%%%%%%%%%%%%%%%
\section*{Authors' contributions}
    All authors contributed equally.
\bigskip
%%%%%%%%%%%%%%%%%%%%%%%%%%%
%=====	ACKNOWLEDGEMENTS ==============
%%%%%%%%%%%%%%%%%%%%%%%%%%%
\section*{Acknowledgements}
\ifthenelse{\boolean{publ}}{\small}{}
SK thanks Marston Conder for fruitful discussions on the group theoretical background.
This work is financially supported by the Wiener Wissenschafts-, Forschungs- and Technologiefonds (WWTF).
AvH also acknowledges the funding from the DFG Deep Metazoan Phylogeny project, SPP (HA1628/9) and the support from the Austrian GEN-AU project Bioinformatics Integration Network III.
   
%%%%%%%%%%%%%%%%%%%%%%%%%%%%%%%%%%%%%%%%%%%%%%%%%%%%%%%%%%%%%
%%                  The Bibliography                       %%
%%                                                         %%              
%%  Bmc_article.bst  will be used to                       %%
%%  create a .BBL file for submission, which includes      %%
%%  XML structured for BMC.                                %%
%%  After submission of the .TEX file,                     %%
%%  you will be prompted to submit your .BBL file.         %%
%%                                                         %%
%%                                                         %%
%%  Note that the displayed Bibliography will not          %% 
%%  necessarily be rendered by Latex exactly as specified  %%
%%  in the online Instructions for Authors.                %% 
%%                                                         %%
%%%%%%%%%%%%%%%%%%%%%%%%%%%%%%%%%%%%%%%%%%%%%%%%%%%%%%%%%%%%%

\newpage
{\ifthenelse{\boolean{publ}}{\footnotesize}{\small}
 \bibliographystyle{bmc_article}  % Style BST file
  \bibliography{misfitsTheoryBib} }     % Bibliography file (usually '*.bib' ) 

%%%%%%%%%%%

\ifthenelse{\boolean{publ}}{\end{multicols}}{}

%%%%%%%%%%%%%%%%%%%%%%%%%%%%%%%%%%%
%%                               %%
%% Figures                       %%
%%                               %%
%% NB: this is for captions and  %%
%% Titles. All graphics must be  %%
%% submitted separately and NOT  %%
%% included in the Tex document  %%
%%                               %%
%%%%%%%%%%%%%%%%%%%%%%%%%%%%%%%%%%%

%%
%% Do not use \listoffigures as most will included as separate files

%%%%%%%%%%%%%%%%%%%%%%%%%%%%%%%%%%%
\section*{Figures}
%\bigskip

\bigskip
%%%%%%%%%%%%%%%%%%%%%%%%%%%%%%%%%%%
\subsection*{Figure \figOSM - Construction of the OSM matrix}

\begin{figure}[!htb]
  \center
  \includegraphics[width=0.7\textwidth]{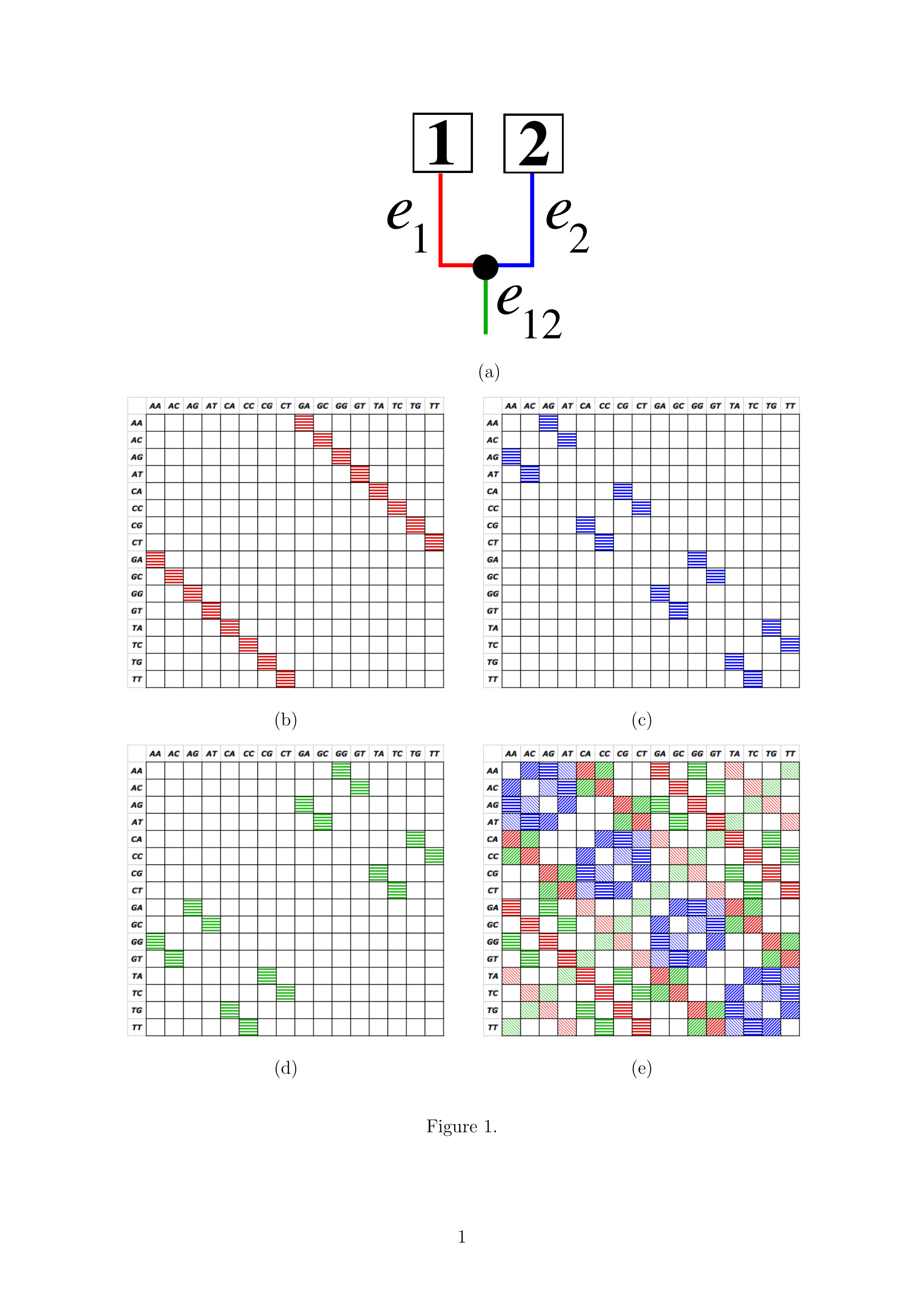}
  \caption{(a) A rooted tree with Taxa 1 and 2.
(b) A transition $s_1$ on the left branch $e_1$ (the red branch) changes a character into exactly one new character as depicted by the red horizontal stripe cells of the permutation matrix $\sigma^{e_1,s_1}$. The matrix has 16 rows and 16 columns representing the possible characters for the alignment of two nucleotide sequences. The permutation matrices generated by $s_1$ for the right branch $e_2$ (blue) and for the branch leading to the ``root" $e_{12}$ (green) are displayed in (c) and (d), respectively.
The convex sum of all the weighted (by the relative branch length and the probability of the substitution type) permutation matrices generated by all substitution types for all branches is the OSM matrix of the tree ($M_\T$) as shown in (e).
Horizontal stripe cells represent the probability of the Transition $s_1$; diagonal stripes the Transversion $s_2$; and thin reverse diagonal stripes the Transversion $s_3$.
The colors of these cells indicate the relative branch lengths and follow the colors of the branches as in (a).
Thus, these colors also depict the branch origin of the substitutions.}
   \label{fig:Figure1}
\end{figure}

%%%%%%%%%%%%%%%%%%%%%%%%%%%%%%%%%%%
\subsection*{Figure \figCayley - The Cayley graph for the two-taxon tree from \figcite{\figOSM}(a).}

\begin{figure}[!htb]
  \center
  \includegraphics[width=0.7\textwidth]{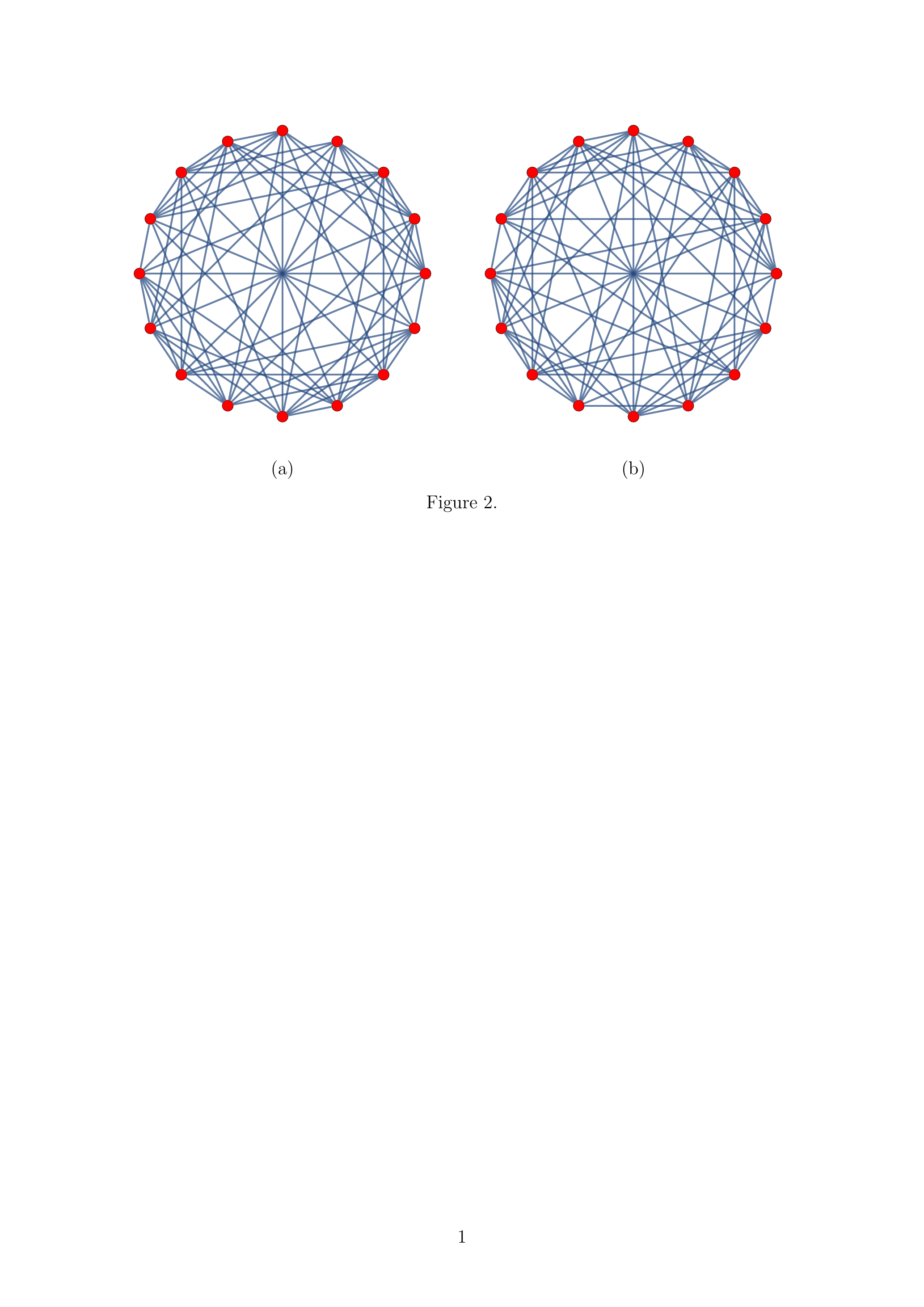}
  \caption{The vertices depict the characters in $\C_{DNA}^2$. Two vertices are connected by an edge if there is a permutation in $\mS^\T$ transforming one of the associated characters into the other. (a) depicts the Cayley graph for the Klein Four group $\Z_2\times\Z_2$, and (b) depicts the Cayley graph for the cyclic group $\Z_4$.}
   \label{fig:Figure2}
\end{figure}

%%%%%%%%%%%%%%%%%%%%%%%%%%%%%%%%%%%
\bigskip
\subsection*{Figure \figEgKlein - Computing the minimal number of substitutions to translate a character into another one.}

\begin{figure}[!htb]
  \center
  \includegraphics[width=0.7\textwidth]{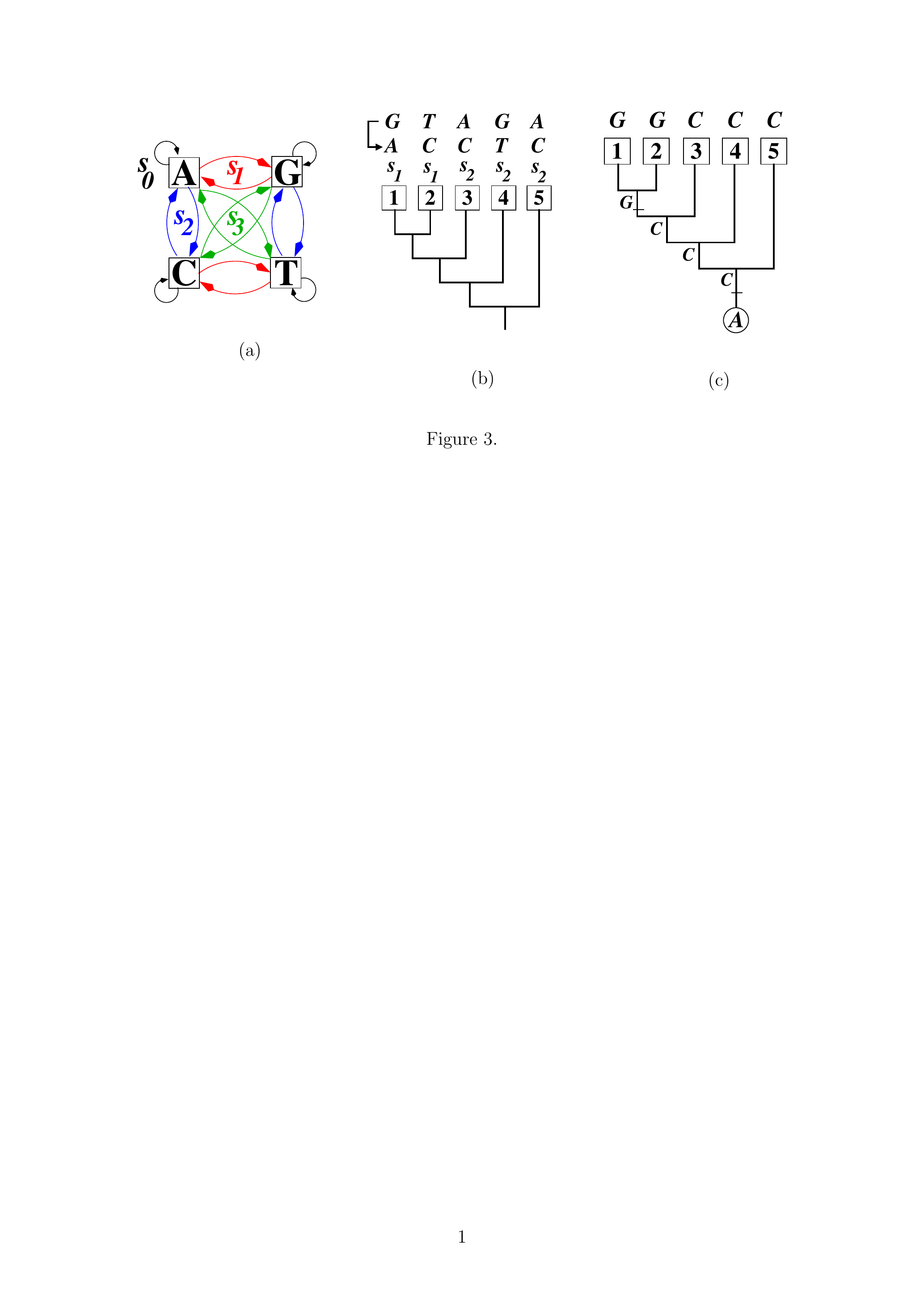}
  \caption{(a) depicts the Klein-four group $\mS_{K3ST}$, which consists of the identity $s_0$ and the three substitution types $s_1, s_2, s_3$ from the K3ST model. (b) In order to convert the character $GTAGA$ into $ACCTC$ under $\mS_{K3ST}$, we need to introduce the operation $\bm{s} :=(s_1, s_1, s_2, s_2, s_2)$. As the operation $\bm{s}$ also translates the constant character $AAAAA$ to $GGCCC$, converting $GTAGA$ into $ACCTC$ is equivalent to evolving the character state $A$ at the root along the tree to obtain the character $GGCCC$ at the leaves. The Fitch algorithm applied to the latter produces a unique most parsimonious solution of two substitutions as depicted by (c).}
   \label{fig:Figure3}
\end{figure}

%%%%%%%%%%%%%%%%%%%%%%%%%%%%%%%%%%%
\bigskip
\subsection*{Figure \figEgCyclic - Converting one character into another character using the cyclic group.}

\begin{figure}[!htb]
  \center
  \includegraphics[width=0.7\textwidth]{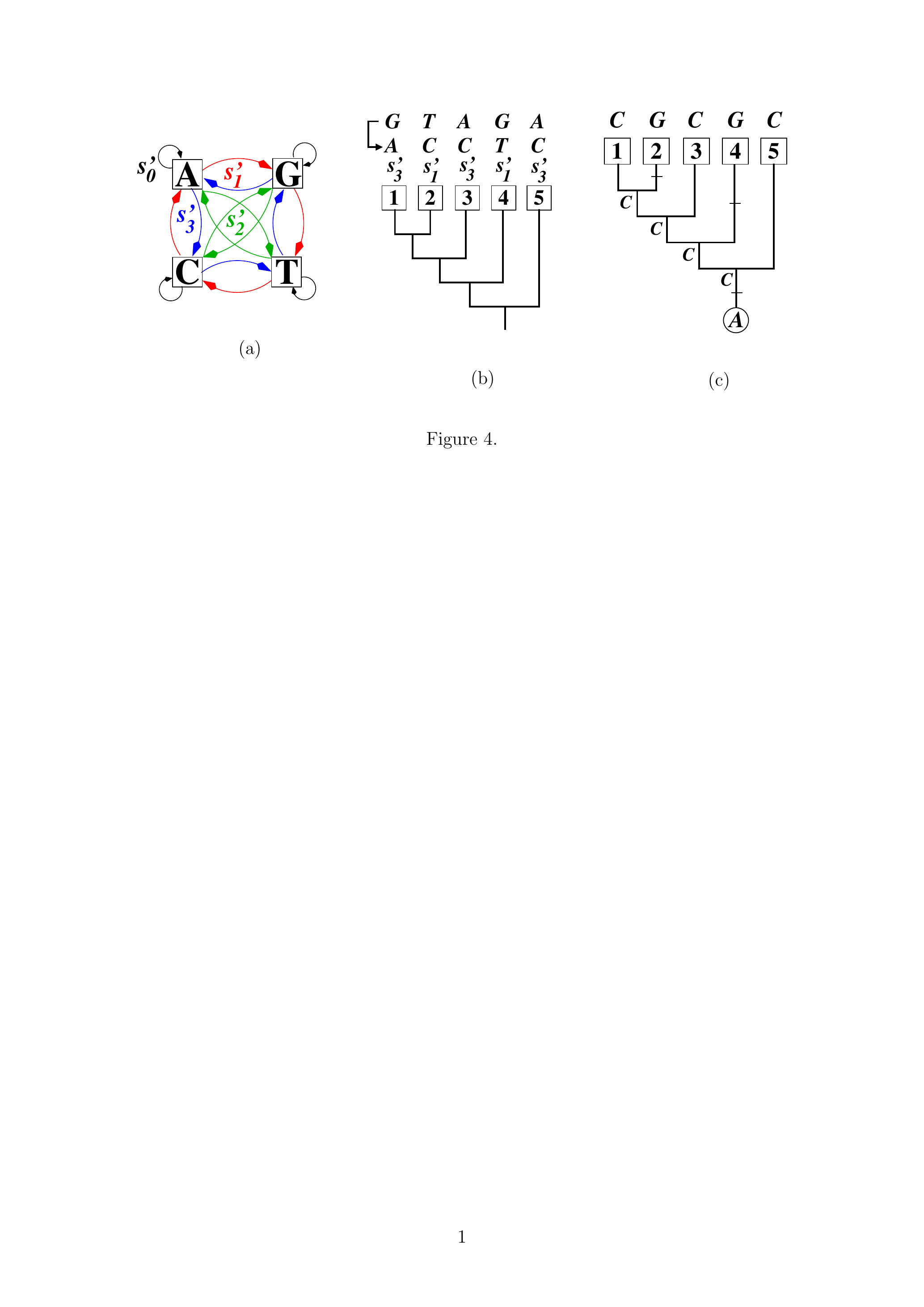}
  \caption{(a) depicts the cyclic group $\mS_c$, which consists of the identity $s'_0 \equiv s_0$ and the three substitution types $s'_1, s'_2, s'_3$ for nucleotide character states. (b) In order to convert the character $GTAGA$ into $ACCTC$ using this group, we need to introduce the operation $\bm{s}' :=(s'_3,s'_1,s'_3, s'_1, s'_3)$. As the operation $\bm{s}'$ also transforms the constant character $AAAAA$ to $CGCGC$, converting $GTAGA$ into $ACCTC$ is equivalent to evolving the character state $A$ at the root along the tree such that the character $CGCGC$ is attained at the leaves. The Fitch algorithm applied to the latter produces a unique most parsimonious solution of three substitutions as depicted by (c). }
   \label{fig:Figure4}
\end{figure}

%%%%%%%%%%%%%%%%%%%%%%%%%%%%%%%%%%%
\bigskip
\subsection*{Figure \figAminoAcid - Matrices illustrate the four Abelian groups for the twenty-state amino acid alphabet.}

\begin{figure}[!htb]
  \center
  \includegraphics[width=0.7\textwidth]{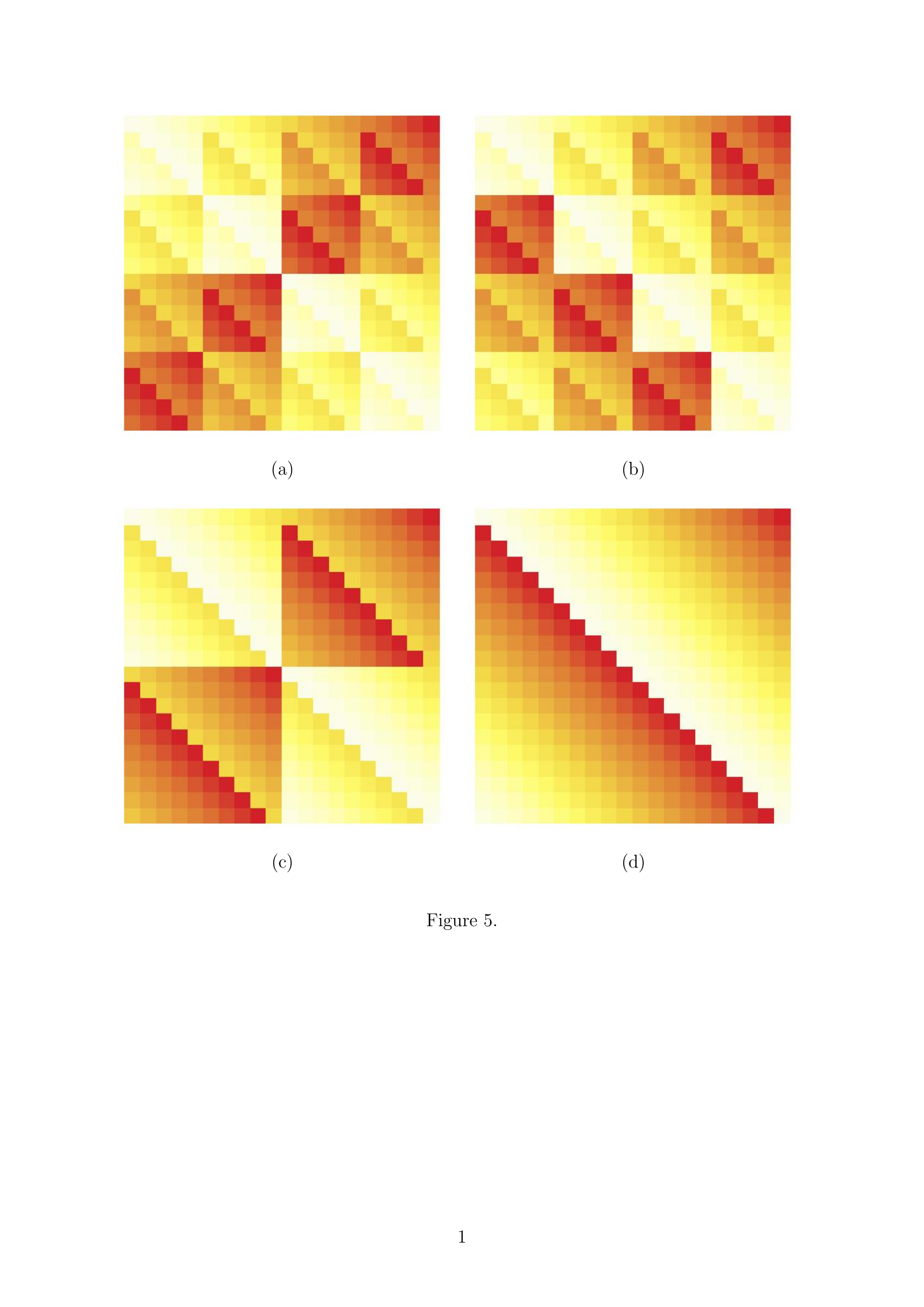}
  \caption{(a) the $\Z_2 \times \Z_2 \times \Z_5$ group, (b) $\Z_4 \times \Z_5$, (c) $\Z_2 \times \Z_{10}$, and (d) $\Z_{20}$.  In each matrix, the 20 different colors ranging from light yellow to dark red can be regarded to represent 20 substitution types, i.e. 20 operations including the identity acting on the amino acid character states or the corresponding probabilities of these substitution types. In the latter case, the matrices are all doubly stochastic.}
   \label{fig:Figure5}
\end{figure}

%%%%%%%%%%%%%%%%%%%%%%%%%%%%%%%%%%%
%%                               %%
%% Tables                        %%
%%                               %%
%%%%%%%%%%%%%%%%%%%%%%%%%%%%%%%%%%%

%% Use of \listoftables is discouraged.
%%
%\section*{Tables}
%  \subsection*{Table 1 - Sample table title}
%    Here is an example of a {\it small} table in \LaTeX\ using  
%    \verb|\tabular{...}|. This is where the description of the table 
%    should go. \par \mbox{}
%    \par
%    \mbox{
%      \begin{tabular}{|c|c|c|}
%        \hline \multicolumn{3}{|c|}{My Table}\\ \hline
%        A1 & B2  & C3 \\ \hline
%        A2 & ... & .. \\ \hline
%        A3 & ..  & .  \\ \hline
%      \end{tabular}
%      }

%%%%%%%%%%%%%%%%%%%%%%%%%%%%%%%%%%%
%%                               %%
%% Additional Files              %%
%%                               %%
%%%%%%%%%%%%%%%%%%%%%%%%%%%%%%%%%%%

%\section*{Additional Files}
 % \subsection*{Additional file 1 --- Sample additional file title}
  %  Additional file descriptions text (including details of how to
%    view the file, if it is in a non-standard format or the file extension).  This might
%    refer to a multi-page table or a figure.
%
%  \subsection*{Additional file 2 --- Sample additional file title}
%    Additional file descriptions text.
\end{bmcformat}
\end{document}